\documentclass[11pt]{article}

\usepackage[T1]{fontenc}
\usepackage{lmodern}
\usepackage{microtype}
\usepackage[margin=1in]{geometry}
\usepackage{amsmath,amssymb,amsthm,mathtools,bm}
\usepackage{booktabs,array}
\usepackage{graphicx}
\usepackage{enumitem}
\usepackage[section]{placeins}
\usepackage[numbers,sort&compress]{natbib}
\usepackage[hidelinks]{hyperref}
\usepackage{xcolor}

\hypersetup{
  pdftitle={Maximal-Rank Squeezing from Rank-One Quenches: An Exact Spectral-Reachability Theorem for Gaussian Bosonic Systems},
  pdfauthor={Kamran Ansari},
  pdfsubject={Exact finite-coupling theorem showing that one localized rank-one quadratic element can generate maximal-rank multimode squeezing},
  pdfkeywords={Bogoliubov transformation, multimode squeezing, quadratic Hamiltonian, low-rank perturbation, controllability Gramian, Krylov subspace, matrix square root}
}

\allowdisplaybreaks
\setlist{nosep,leftmargin=*}

\newtheorem{theorem}{Theorem}[section]
\newtheorem{proposition}[theorem]{Proposition}
\newtheorem{corollary}[theorem]{Corollary}
\newtheorem{lemma}[theorem]{Lemma}
\theoremstyle{definition}
\newtheorem{definition}[theorem]{Definition}
\theoremstyle{remark}
\newtheorem{remark}[theorem]{Remark}

\newcommand{\R}{\mathbb{R}}

\newcommand{\calC}{\mathcal{C}}
\newcommand{\calD}{\mathcal{D}}

\newcommand{\ran}{\operatorname{ran}}
\newcommand{\rank}{\operatorname{rank}}
\newcommand{\sgn}{\operatorname{sgn}}

\newcommand{\diag}{\operatorname{diag}}

\newcommand{\eps}{\varepsilon}
\newcommand{\norm}[1]{\left\lVert #1\right\rVert}
\newcommand{\inner}[2]{\left\langle #1,#2\right\rangle}

\title{\textbf{Maximal-Rank Squeezing from Rank-One Quenches:}\\
An Exact Spectral-Reachability Theorem for Gaussian Bosonic Systems}
\author{Kamran Ansari\\
\small Stanford University\\
\small \href{mailto:ansarik@stanford.edu}{\texttt{ansarik@stanford.edu}}}
\date{August 2026}

\begin{document}
\maketitle

\begin{abstract}
We find and prove that, for any finite $N$, a single localized rank-one quadratic element can activate all $N$ canonical squeezing channels at every nonzero stable coupling, provided the free spectrum is simple and the element overlaps every mode. Equivalently, a rank-one microscopic stiffness update can generate a full-rank anomalous Bogoliubov block. More generally, for free frequency matrix $\Omega>0$ and a stable sign-definite quench $\Omega^2\mapsto\Omega^2+\eps UU^T$,
\[
\rank\beta_\eps
=
\dim\operatorname{span}\{\ran U,\Omega\ran U,\Omega^2\ran U,\ldots\}
=
\sum_a\rank(P_aU),
\]
where $P_a$ projects onto each distinct free-frequency eigenspace. Thus the microscopic quench rank does not bound the squeezing rank; exact support is determined by spectral reachability. To the best of our knowledge, this exact all-coupling rank, kernel, and range law, including its one-element/full-rank corollary, has not been stated previously. The result identifies active and dark sectors, quantifies degeneracy bottlenecks, and gives an exact actuator-position law without computing the full Bogoliubov transformation. Its practical implication is that broad simultaneous squeezing support may require far fewer physical elements than active channels. A signed Lyapunov-Gramian path proves the result. Full algebraic rank certifies channel existence, not strength or independent tunability.
\end{abstract}

\noindent\textbf{Keywords:} Bogoliubov transformation; multimode squeezing; quadratic Hamiltonian; low-rank perturbation; controllability Gramian; Krylov subspace; matrix square root.

\section{Introduction}

Gaussian bosonic systems provide a common framework for continuous-variable quantum information, multimode quantum optics, cavity and circuit platforms, and dynamical Casimir protocols \citep{Braunstein2005,Weedbrook2012,Fabre2020,Johansson2010}. Because their Hamiltonians are quadratic in canonical variables, changes of oscillator structure can be represented by Bogoliubov transformations and analyzed through linear algebra. The Bloch--Messiah decomposition separates passive mode mixing from single-mode squeezing, and the nonzero singular values of the anomalous Bogoliubov block identify the active canonical squeezing channels \citep{Cariolaro2016}. Thus $\rank\beta_\eps$ is a basis-independent count of how many squeezing directions are present, before questions of strength, loss tolerance, or independent control are considered.

Many physical controls are much simpler than the multimode systems on which they act. A localized defect, a modulated boundary element, or a collective quadratic coordinate may introduce only one or a few spatial coupling profiles while overlapping many normal modes. After a finite mode truncation, a single term such as $\eps\phi(x_0)^2/2$ becomes a rank-one stiffness update $\eps uu^T$. This creates a distinction between the number of physical actuator profiles and the number of canonical response channels: the former is fixed by the update geometry, whereas the latter is determined only after the free spectrum and the resulting Bogoliubov transformation are taken into account.

The distinction arises because the final oscillator structure is governed by the principal matrix square root $W_\eps=(\Omega^2+\eps UU^T)^{1/2}$, not directly by the rank of $UU^T$. Distinct free frequencies process the same coupling direction differently, so a low-rank input can generate a dense, algebraically high-rank matrix-function correction. Low-rank matrix-function updates, square-root derivatives, and Krylov approximations are well studied in numerical linear algebra \citep{Beckermann2018,Fasi2023,Shmueli2024}. The Fr\'echet derivative of the principal square root is characterized by a Sylvester or Lyapunov equation and an integral Gramian representation \citep{DelMoral2018,Higham2008,Simoncini2016}. In mathematical physics, general Bogoliubov diagonalization and quadratic Hamiltonians with rank-one field perturbations are also established \citep{Nam2016,Derezinski2017,Gamet2026}. These results supply the main ingredients, but they do not by themselves give an exact finite-coupling count of which canonical squeezing channels are nonzero.

The question addressed here is therefore an exact support question: for a stable sign-definite finite-rank quench, what are the kernel, range, and rank of $\beta_\eps$, and can a direction that appears at weak coupling disappear through finite-coupling cancellation? If $\Omega=\sum_{a=1}^{m}\nu_aP_a$ is the decomposition into distinct free-frequency sectors, we prove
\begin{equation}
\boxed{
\rank\beta_\eps=\sum_{a=1}^{m}\rank(P_aU).
}
\label{eq:headline}
\end{equation}
Equivalently, the active-channel space is the frequency-reachable block-Krylov subspace generated by $(\Omega,U)$. A rank-one source contributes exactly one channel in every sector it touches and is full rank for a simple spectrum when every mode overlap is nonzero. The projections $P_aU$ identify dark sectors, quantify degeneracy bottlenecks, and determine actuator-placement effects before the full Gaussian transformation is computed.

We prove the result by factoring $\beta_\eps$ through a matrix-square-root correction. Its path derivative is a signed positive semidefinite controllability Gramian, so reachable directions cannot cancel at finite coupling. This upgrades a known first-order structure \citep{DelMoral2018,Higham2008,Simoncini2016} to an exact nonlinear support theorem and yields a standalone square-root kernel-and-range result. The result claimed as new is the exact all-coupling support law and its rank-one, degeneracy, and actuator-position consequences, not the qualitative possibility that one element can couple multiple modes.

\section{Gaussian quench and the meaning of squeezing rank}
\label{sec:setup}

Work in units with $\hbar=1$. Let $q,p\in\R^N$ be canonical quadratures satisfying $[q_j,p_k]=i\delta_{jk}$. The free Hamiltonian is
\begin{equation}
H_0=\frac12p^Tp+\frac12q^T\Omega^2q,
\qquad \Omega=\Omega^T>0.
\label{eq:H0}
\end{equation}
We apply a sign-definite finite-rank stiffness quench,
\begin{equation}
H_\eps=\frac12p^Tp+\frac12q^T\bigl(\Omega^2+\eps UU^T\bigr)q,
\qquad U\in\R^{N\times r},
\label{eq:Heps}
\end{equation}
where $\eps\in\R\setminus\{0\}$ and
\begin{equation}
K_\eps:=\Omega^2+\eps UU^T>0.
\label{eq:stability}
\end{equation}
No full-column-rank assumption is imposed on $U$. The stiffness update $UU^T$ has rank $d:=\dim\ran U\le r$. Positive and negative quenches are both permitted whenever the final stiffness remains positive definite. Define the final frequency matrix
\begin{equation}
W_\eps:=K_\eps^{1/2}>0.
\end{equation}

The annihilation operators associated with the initial and final oscillator structures may be written as
\begin{align}
a&=\frac{1}{\sqrt2}\left(\Omega^{1/2}q+i\Omega^{-1/2}p\right),\label{eq:a}\\
b&=\frac{1}{\sqrt2}\left(W_\eps^{1/2}q+iW_\eps^{-1/2}p\right).
\label{eq:b}
\end{align}
They are related by
\begin{equation}
b=\alpha_\eps a+\beta_\eps a^\dagger,
\end{equation}
where
\begin{align}
\alpha_\eps&=\frac12\left(W_\eps^{1/2}\Omega^{-1/2}+W_\eps^{-1/2}\Omega^{1/2}\right),\label{eq:alpha}\\
\beta_\eps&=\frac12\left(W_\eps^{1/2}\Omega^{-1/2}-W_\eps^{-1/2}\Omega^{1/2}\right).
\label{eq:beta}
\end{align}
A passive orthogonal rotation can diagonalize $W_\eps$ after this change of oscillator structure. Such passive left or right mode rotations do not change $\rank\beta_\eps$.

The Bloch--Messiah decomposition gives passive transformations $V_1,V_2$ and nonnegative squeezing parameters $s_1,\ldots,s_N$ such that, up to the usual complex-conjugation convention,
\begin{equation}
\beta_\eps=V_1\diag(\sinh s_1,\ldots,\sinh s_N)V_2^T.
\end{equation}
Consequently,
\begin{equation}
\rank\beta_\eps=\#\{j:s_j>0\}.
\label{eq:rankchannels}
\end{equation}
Thus the rank of the anomalous block is the number of nonzero canonical squeezing channels. It is invariant under changes of input and output mode basis.

The key elementary factorization is
\begin{equation}
\boxed{
\beta_\eps
=\frac12W_\eps^{-1/2}(W_\eps-\Omega)\Omega^{-1/2}.
}
\label{eq:betafactor}
\end{equation}
The outer factors are invertible, so
\begin{equation}
\rank\beta_\eps=\rank(W_\eps-\Omega).
\label{eq:rankfactor}
\end{equation}
The quantum question is therefore reduced to an exact support question for a matrix-square-root update: when can a rank-one $UU^T$ make $W_\eps-\Omega$, and hence $\beta_\eps$, full rank?

\section{Frequency reachability}
\label{sec:reachability}

\begin{definition}[Frequency-reachable subspace]
For $\Omega=\Omega^T$ and $U\in\R^{N\times r}$, define
\begin{equation}
\calC(\Omega,U)
:=
\operatorname{span}\{\ran U,\Omega\ran U,\Omega^2\ran U,\ldots\}.
\label{eq:Cdef}
\end{equation}
In finite dimension, the span stabilizes after at most $N$ powers.
\end{definition}

This is the controllable subspace of the stable linear pair $(-\Omega,U)$. It is also the block-Krylov subspace generated by the free frequency matrix and the coupling directions.

\begin{lemma}[Spectral-sector decomposition]
\label{lem:spectralC}
Let
\begin{equation}
\Omega=\sum_{a=1}^{m}\nu_aP_a
\label{eq:spectralOmega}
\end{equation}
be the spectral decomposition into distinct eigenvalues $\nu_a>0$ and orthogonal projectors $P_a$. Then
\begin{equation}
\boxed{
\calC(\Omega,U)=\bigoplus_{a=1}^{m}\ran(P_aU)
}
\label{eq:Cspectral}
\end{equation}
and
\begin{equation}
\boxed{
\dim\calC(\Omega,U)=\sum_{a=1}^{m}\rank(P_aU).
}
\label{eq:Cdim}
\end{equation}
\end{lemma}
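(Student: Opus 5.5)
We prove the two inclusions between $\calC(\Omega,U)$ and $\bigoplus_a\ran(P_aU)$ and then read off the dimension.

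\emph{First inclusion.} Since $\Omega^k=\sum_a\nu_a^kP_a$ and $\sum_aP_a=I$, every generator $\Omega^kUx$ equals $\sum_a\nu_a^kP_aUx$. This vector lies in $\sum_a\ran(P_aU)$. Hence $\calC(\Omega,U)\subseteq\sum_a\ran(P_aU)$.

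\emph{Reverse inclusion.} We show that each block $\ran(P_aU)$ already lies in $\calC(\Omega,U)$. Because the $\nu_a$ are distinct, we can take the Lagrange interpolation polynomial
\[
\ell_a(t)=\prod_{b\neq a}\frac{t-\nu_b}{\nu_a-\nu_b},
\]
which satisfies $\ell_a(\nu_b)=\delta_{ab}$. Then $\ell_a(\Omega)=\sum_b\ell_a(\nu_b)P_b=P_a$. So $P_aUx=\ell_a(\Omega)Ux$ is a linear combination of $Ux,\Omega Ux,\ldots,\Omega^{m-1}Ux$, all of which lie in $\calC(\Omega,U)$. This gives $\ran(P_aU)\subseteq\calC(\Omega,U)$ for each $a$, so the two spaces coincide.

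\emph{Directness and dimension.} The subspaces $\ran(P_aU)\subseteq\ran P_a$ sit inside mutually orthogonal eigenspaces, because $P_aP_b=0$ for $a\neq b$. Their sum is therefore an orthogonal direct sum, which is \eqref{eq:Cspectral}. Taking dimensions gives $\dim\calC(\Omega,U)=\sum_a\dim\ran(P_aU)=\sum_a\rank(P_aU)$, which is \eqref{eq:Cdim}.

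The only step with real content is the reverse inclusion. It rests on the interpolation identity $P_a=\ell_a(\Omega)$, which uses in an essential way that the $\nu_a$ are distinct. Positivity of the $\nu_a$ is not needed for this lemma. As a consistency check, the argument also recovers the remark that the span stabilizes: powers up to $m-1\le N-1$ already suffice.
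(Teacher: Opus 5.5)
Your proposal is correct and follows the paper's proof essentially step for step: the forward inclusion uses $\Omega^k=\sum_a\nu_a^kP_a$, the reverse inclusion uses the Lagrange interpolation identity $P_a=\ell_a(\Omega)$, and directness and the dimension formula come from orthogonality of the spectral sectors. You add the explicit interpolation polynomial and two observations, that positivity of the $\nu_a$ is not needed and that powers up to $m-1$ suffice; both are correct.
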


\begin{proof}
Every vector $\Omega^kUz$ is a linear combination of the vectors $P_aUz$, so $\calC(\Omega,U)$ is contained in the right-hand side of \eqref{eq:Cspectral}. Conversely, because the eigenvalues $\nu_a$ are distinct, each $P_a$ is a polynomial in $\Omega$ by Lagrange interpolation. Hence $P_aU$ belongs to the block-Krylov span. Orthogonality of the spectral sectors gives the direct sum and the dimension formula.
\end{proof}

Our central mathematical finding is stronger than the dimension formula: at every stable finite coupling, $\calC(\Omega,U)$ is exactly the range of both the square-root correction and $\beta_\eps$, while $\calC(\Omega,U)^\perp$ is exactly their kernel.

\section{Exact finite-coupling support theorem}
\label{sec:main}

The following theorem is the main result. It gives the exact finite-coupling channel count and identifies the full kernel and range of the response.

\begin{theorem}[Exact spectral-reachability theorem]
\label{thm:main}
Let $\Omega=\Omega^T>0$, let $U\in\R^{N\times r}$, and let $\eps\in\R\setminus\{0\}$ satisfy $\Omega^2+\eps UU^T>0$. Set
\begin{equation}
W_\eps=(\Omega^2+\eps UU^T)^{1/2},
\qquad \calC=\calC(\Omega,U).
\end{equation}
Then
\begin{align}
\sgn(\eps)(W_\eps-\Omega)&\succeq0,\label{eq:signX}\\
\ker(W_\eps-\Omega)&=\calC^\perp,\label{eq:kernelX}\\
\ran(W_\eps-\Omega)&=\calC.\label{eq:rangeX}
\end{align}
The anomalous Bogoliubov block satisfies the stronger support identities
\begin{equation}
\boxed{
\ker\beta_\eps=\calC^\perp,
\qquad
\ran\beta_\eps=\calC,
}
\label{eq:betasupport}
\end{equation}
and therefore
\begin{equation}
\boxed{
\rank\beta_\eps
=
\dim\calC(\Omega,U)
=
\sum_{a=1}^{m}\rank(P_aU).
}
\label{eq:mainrank}
\end{equation}
In particular, $\sgn(\eps)(W_\eps-\Omega)$ is positive definite when restricted to $\calC$.
\end{theorem}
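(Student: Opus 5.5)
The plan is to follow the signed Lyapunov--Gramian path mentioned in the introduction. For $t\in[0,1]$ set $K(t)=\Omega^2+t\eps UU^T$. Since $K(0)=\Omega^2>0$ and $K(1)=K_\eps>0$, convexity of the positive-definite cone gives $K(t)>0$ on the whole segment. Define $W(t)=K(t)^{1/2}$, which is smooth (indeed real-analytic) in $t$ because the principal square root is analytic on positive definite matrices. Differentiating $W(t)^2=K(t)$ gives the Lyapunov equation
\begin{equation*}
W(t)\dot W(t)+\dot W(t)W(t)=\eps UU^T .
\end{equation*}
Because $W(t)>0$, its unique solution is the integral Gramian
\begin{equation*}
\dot W(t)=\eps\,G(t),\qquad G(t)=\int_0^\infty e^{-sW(t)}UU^Te^{-sW(t)}\,ds\succeq0 .
\end{equation*}
Integrating in $t$ yields $W_\eps-\Omega=\eps\int_0^1G(t)\,dt$, so $\sgn(\eps)(W_\eps-\Omega)=|\eps|\int_0^1G(t)\,dt\succeq0$. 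This is \eqref{eq:signX}.

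For the kernel I will use that a nonnegative sum of PSD matrices has kernel equal to the intersection of the kernels. A vector $x$ lies in $\ker(W_\eps-\Omega)$ iff $x^TG(t)x=0$ for almost every $t$, hence for every $t$ by continuity. In particular this holds at $t=0$, where $x^TG(0)x=\int_0^\infty\|U^Te^{-s\Omega}x\|^2ds$. Vanishing of this integral means $U^Te^{-s\Omega}x\equiv0$; differentiating at $s=0$ gives $U^T\Omega^kx=0$ for all $k$, i.e.\ $x\perp\calC$. Conversely, I must show that $x\in\calC^\perp$ implies $x^TG(t)x=0$ for \emph{every} $t$, not just $t=0$. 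This is the main obstacle: the finite-coupling step, where cancellation must be ruled out.

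The approach is to show $\calC$ is invariant under $W(t)$. Since $\calC$ is $\Omega$-invariant and contains $\ran U$, it is invariant under $\Omega^2$ and under $UU^T$, hence under $K(t)$. A symmetric $K(t)$ leaving $\calC$ invariant also leaves $\calC^\perp$ invariant, and therefore so does $W(t)$, a polynomial in $K(t)$. Moreover $K(t)x=\Omega^2x$ for $x\in\calC^\perp$, since $U^Tx=0$. So $W(t)$ agrees with $\Omega$ on $\calC^\perp$: both are the principal roots of the same positive operator restricted to that invariant subspace. Then $e^{-sW(t)}x\in\calC^\perp$, giving $U^Te^{-sW(t)}x=0$ and $x^TG(t)x=0$; this directly gives $(W_\eps-\Omega)x=0$. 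Hence $\ker(W_\eps-\Omega)=\calC^\perp$, and symmetry gives $\ran(W_\eps-\Omega)=\calC$. Positive definiteness of $\sgn(\eps)(W_\eps-\Omega)$ on $\calC$ follows because it is PSD with kernel exactly $\calC^\perp$.

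Finally I transfer the support to $\beta_\eps$ via \eqref{eq:betafactor}. The kernel identity of $\beta_\eps$ needs care, since the outer factors are not the identity. Here I will use that $\Omega$, and hence $\Omega^{-1/2}$ (a polynomial in $\Omega$), preserves both $\calC$ and $\calC^\perp$; by the invariance argument above, $W_\eps^{-1/2}$ preserves them as well. Thus $\ker\beta_\eps=\Omega^{1/2}\calC^\perp=\calC^\perp$ and $\ran\beta_\eps=W_\eps^{-1/2}\calC=\calC$, both by invertibility of the factors restricted to the invariant subspaces. The rank formula \eqref{eq:mainrank} then follows from Lemma~\ref{lem:spectralC}.
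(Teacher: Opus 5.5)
Your proposal is correct and matches the paper's proof essentially step for step. Both use the convex path $K_t$, the Lyapunov--Gramian formula for $\dot W_t$ to get the sign, vanishing of the nonnegative integrand at $t=0$ to force $U^T\Omega^k x=0$, the reducing-subspace argument giving $W_t|_{\calC^\perp}=\Omega|_{\calC^\perp}$, and the transfer to $\beta_\eps$ through the factorization using invariance of $\calC$ and $\calC^\perp$ under the outer factors.

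One small remark: the inclusion $\calC^\perp\subseteq\ker(W_\eps-\Omega)$ follows at once from $W_1|_{\calC^\perp}=\Omega|_{\calC^\perp}$, so the detour through $x^TG(t)x=0$ is unnecessary. That inclusion is also the easy direction; the genuine no-cancellation step is the $t=0$ argument you had already given.
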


\begin{proof}
Introduce the path
\begin{equation}
K_t=\Omega^2+t\eps UU^T,
\qquad
W_t=K_t^{1/2},
\qquad 0\le t\le1.
\label{eq:path}
\end{equation}
The entire path is positive definite because
\begin{equation}
K_t=(1-t)\Omega^2+tK_\eps>0.
\end{equation}
The principal square root is analytic on the positive-definite cone, so $t\mapsto W_t$ is smooth. Differentiating $W_t^2=K_t$ gives the Lyapunov equation
\begin{equation}
W_t\dot W_t+\dot W_tW_t=\eps UU^T.
\label{eq:lyap}
\end{equation}
Since $W_t>0$, its unique solution is
\begin{equation}
\dot W_t
=
\eps\int_0^\infty e^{-sW_t}UU^Te^{-sW_t}\,ds.
\label{eq:gramianpath}
\end{equation}
Let $\eta=\sgn(\eps)$. The integrand in
\begin{equation}
\eta\dot W_t
=
|\eps|\int_0^\infty e^{-sW_t}UU^Te^{-sW_t}\,ds
\end{equation}
is positive semidefinite, and hence
\begin{equation}
\eta(W_\eps-\Omega)=\int_0^1\eta\dot W_t\,dt\succeq0.
\end{equation}
This proves \eqref{eq:signX}.

The subspace $\calC$ is invariant under $\Omega$ and contains $\ran U$. Because $\Omega$ is symmetric, $\calC^\perp$ is also invariant under $\Omega$, and $U^Tx=0$ for $x\in\calC^\perp$. Thus every $K_t$ reduces the orthogonal decomposition $\calC\oplus\calC^\perp$, and its restriction to $\calC^\perp$ is exactly $\Omega^2$. Functional calculus gives
\begin{equation}
W_t|_{\calC^\perp}=\Omega|_{\calC^\perp}.
\end{equation}
Therefore
\begin{equation}
\calC^\perp\subseteq\ker(W_\eps-\Omega).
\label{eq:firstinclusion}
\end{equation}

For the reverse inclusion, let $x\in\ker(W_\eps-\Omega)$. By the positivity established above,
\begin{equation}
0=x^T\eta(W_\eps-\Omega)x
=\int_0^1x^T\eta\dot W_tx\,dt.
\label{eq:zeroquad}
\end{equation}
The integrand is continuous and nonnegative, so it vanishes at every $t$, in particular at $t=0$. Since $W_0=\Omega$, \eqref{eq:gramianpath} yields
\begin{equation}
0
=|\eps|\int_0^\infty\norm{U^Te^{-s\Omega}x}_2^2\,ds.
\label{eq:zerogramian}
\end{equation}
The scalar integrand in \eqref{eq:zerogramian} is continuous and nonnegative, so it vanishes for every $s\ge0$. Hence $U^Te^{-s\Omega}x=0$ for all $s\ge0$. This vector-valued function is analytic in $s$, and differentiating at $s=0$ gives
\begin{equation}
U^T\Omega^kx=0
\qquad\text{for every }k\ge0.
\end{equation}
Equivalently, $x$ is orthogonal to every $\Omega^k\ran U$, so $x\in\calC^\perp$. Combined with \eqref{eq:firstinclusion}, this proves \eqref{eq:kernelX}.

The matrix $W_\eps-\Omega$ is symmetric, reduces $\calC\oplus\calC^\perp$, vanishes on $\calC^\perp$, and has no kernel on $\calC$. Its range is therefore $\calC$, proving \eqref{eq:rangeX}.

Finally, $\Omega$ and $W_\eps$ both preserve $\calC$ and $\calC^\perp$; hence so do all of their real powers, by functional calculus. From the factorization \eqref{eq:betafactor},
\begin{equation}
\beta_\eps x=0
\iff
\Omega^{-1/2}x\in\ker(W_\eps-\Omega)
\iff
x\in\calC^\perp,
\end{equation}
which proves the kernel identity for $\beta_\eps$. Its range is
\begin{equation}
\ran\beta_\eps
=W_\eps^{-1/2}\ran(W_\eps-\Omega)
=W_\eps^{-1/2}\calC
=\calC.
\end{equation}
Taking dimensions and applying Lemma~\ref{lem:spectralC} proves \eqref{eq:mainrank}.
\end{proof}

\subsection{Immediate consequences of the main finding}

\begin{corollary}[Exact rank-one channel count]
\label{cor:rankone}
Let $U=u\in\R^N$. Then
\begin{equation}
\boxed{
\rank\beta_\eps=\#\{a:P_au\neq0\}.
}
\label{eq:rankonecount}
\end{equation}
A rank-one source contributes exactly one squeezing channel in every distinct frequency sector that it touches.
\end{corollary}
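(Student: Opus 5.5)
The corollary is a direct specialization of Theorem~\ref{thm:main}, so the plan is to take $U=u\in\R^{N\times1}$ and read off \eqref{eq:mainrank}. For a single column, each $P_au$ is a vector in $\R^N$, so $\rank(P_au)$ is $1$ if $P_au\neq0$ and $0$ otherwise. Substituting into $\rank\beta_\eps=\sum_{a=1}^m\rank(P_au)$ gives \eqref{eq:rankonecount} at once. No new analytic input is needed: the stability hypothesis $\Omega^2+\eps uu^T>0$ and $\eps\neq0$ are exactly those of the theorem. The no-cancellation property is inherited from the Gramian argument there.

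For the interpretive sentence, ``exactly one squeezing channel in every distinct frequency sector that it touches,'' I would use the range identity in \eqref{eq:betasupport}. Combined with Lemma~\ref{lem:spectralC}, it gives
\[
\ran\beta_\eps=\calC(\Omega,u)=\bigoplus_{a:\,P_au\neq0}\operatorname{span}\{P_au\}.
\]
The active-channel space is therefore an orthogonal direct sum of lines, one for each touched sector. Each line is the normalized projection $P_au/\norm{P_au}$ and lies inside the eigenspace $\ran P_a$. Sectors with $P_au=0$ lie in $\calC^\perp=\ker\beta_\eps$ and are dark.

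A degenerate sector of multiplicity $\dim\ran P_a>1$ still contributes only one channel, because a single vector $u$ spans at most a line in it. I would add a remark on what ``channel in a sector'' means. Bloch--Messiah channels are singular vectors of $\beta_\eps$ and need not align individually with the sectors. The statement should therefore be read at the level of $\ran\beta_\eps$: the rank decomposes additively over sectors, each touched sector contributing one dimension.

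There is no real obstacle here. The only point needing care is this interpretation of ``per sector,'' which I would state as the additive decomposition of the range rather than as a claim about individual singular vectors. As a sanity check I would note the special case of a simple spectrum with every overlap $P_au\neq0$. Then $m=N$ and the formula gives $\rank\beta_\eps=N$, the full-rank statement from the abstract.
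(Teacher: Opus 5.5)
Your proposal is correct and matches the paper, which treats this corollary as an immediate specialization of \eqref{eq:mainrank} with $\rank(P_au)\in\{0,1\}$. Your added caveat is a sound clarification the paper leaves implicit: ``one channel per touched sector'' describes the additive decomposition of $\ran\beta_\eps$, not individual Bloch--Messiah singular vectors.
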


\begin{corollary}[One localized rank-one element activates all channels]
\label{cor:generic}
If $\Omega$ has simple spectrum and $u$ has nonzero overlap with every eigenvector of $\Omega$, then
\begin{equation}
\rank\beta_\eps=N
\end{equation}
for every nonzero stable coupling. For fixed simple $\Omega$, this condition holds for Lebesgue-almost every $u\in\R^N$. Thus one generic localized rank-one quadratic element activates every canonical squeezing channel in the retained $N$-mode system.
\end{corollary}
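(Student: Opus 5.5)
The plan is to specialize Corollary~\ref{cor:rankone} and then handle the genericity claim separately. By Theorem~\ref{thm:main} and Corollary~\ref{cor:rankone}, for $U=u$ we have $\rank\beta_\eps=\#\{a:P_au\neq0\}$ at every nonzero $\eps$ with $\Omega^2+\eps uu^T>0$. If $\Omega$ has simple spectrum, then $m=N$, and each $P_a=e_ae_a^T$ is the rank-one projector onto a unit eigenvector $e_a$. Hence $P_au=(e_a^Tu)e_a$, and $P_au\neq0$ exactly when $u$ has nonzero overlap with $e_a$. If every overlap is nonzero, all $N$ sectors are touched, so $\rank\beta_\eps=N$. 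Nothing here depends on the particular value of $\eps$, which gives the ``every nonzero stable coupling'' clause. That clause is really the content of Theorem~\ref{thm:main}: no finite-coupling cancellation can occur.

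For the almost-everywhere statement, I would fix the orthonormal eigenbasis $\{e_a\}$ of the simple $\Omega$. The bad set is $\bigcup_{a=1}^N\{u:e_a^Tu=0\}$, a finite union of hyperplanes through the origin. Each hyperplane has Lebesgue measure zero, so their union does too. The complement, the set of $u$ satisfying the overlap condition, therefore has full measure.

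I expect no real obstacle. The one point needing care is that simplicity of the spectrum is what makes each $\rank(P_au)\le1$ and $m=N$. Without simplicity, $m<N$ and full rank fails for every rank-one $u$; I would note this as a remark rather than prove it. I would also state explicitly that ``localized'' plays no mathematical role beyond $U$ having a single column. The physical reading is only that a point term $\eps\phi(x_0)^2/2$ gives $u_j=\phi_j(x_0)$, and the condition becomes that $x_0$ avoids every nodal point of the retained modes.
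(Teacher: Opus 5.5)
Your proposal is correct and follows essentially the paper's route. The paper gives no separate proof: it reads the result off from Corollary~\ref{cor:rankone}, where a simple spectrum makes each $P_a=e_ae_a^T$, so the channel count is the number of nonzero overlaps $e_a^Tu$, with no dependence on $\eps$ by Theorem~\ref{thm:main}. The almost-everywhere clause follows, as you argue, because the excluded set $\bigcup_a e_a^\perp$ is a finite union of hyperplanes and hence Lebesgue-null.
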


\begin{corollary}[Degeneracy bottleneck]
\label{cor:degenerate}
If the frequency $\nu_a$ has multiplicity $g_a$, its exact contribution to the squeezing rank is
\begin{equation}
\rank(P_aU)\le\min(g_a,d)\le\min(g_a,r),
\qquad d=\rank U.
\end{equation}
For a generic $U\in\R^{N\times r}$ (which has $d=\min(N,r)$),
\begin{equation}
\rank\beta_\eps=\sum_a\min(g_a,r).
\label{eq:genericdegenerate}
\end{equation}
In particular, a rank-one quench resolves at most one direction inside each degenerate frequency sector.
\end{corollary}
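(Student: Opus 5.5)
By Theorem~\ref{thm:main}, $\rank\beta_\eps=\sum_a\rank(P_aU)$, so the whole statement reduces to estimating each term $\rank(P_aU)$. I will first prove the deterministic bound and then the generic equality.

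For the bound, $P_aU$ is a matrix whose range lies in $\ran P_a$, a space of dimension $g_a$. Hence $\rank(P_aU)\le g_a$. Also $\rank(P_aU)\le\rank U=d$, since the range of $P_aU$ is the image of $\ran U$ under $P_a$. Finally $d\le r$ because $U$ has $r$ columns. Together these give $\rank(P_aU)\le\min(g_a,d)\le\min(g_a,r)$. The rank-one statement is the special case $r=1$: each sector contributes at most one channel, in agreement with Corollary~\ref{cor:rankone}.

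For the generic equality, I will fix an orthonormal eigenbasis of each sector, collected as $E_a\in\R^{N\times g_a}$ with $P_a=E_aE_a^T$. Then $\rank(P_aU)=\rank(E_a^TU)$, because $E_a$ has orthonormal columns and so multiplying by it on the left preserves rank.
\begin{itemize}
\item The map $U\mapsto E_a^TU$ is a linear surjection from $\R^{N\times r}$ onto $\R^{g_a\times r}$.
\item The condition $\rank(E_a^TU)<\min(g_a,r)$ says that all maximal minors of $E_a^TU$ vanish. These minors are polynomials in the entries of $U$.
\item They are not identically zero: choose $U$ so that $E_a^TU$ has an identity block, which surjectivity allows.
\item So the exceptional set for sector $a$ is a proper algebraic subset, and therefore has Lebesgue measure zero.
\end{itemize}
The same argument applied to the maximal minors of $U$ itself shows $d=\min(N,r)$ generically. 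A finite union of the $m+1$ measure-zero sets is still measure zero. Off this union, every term equals $\min(g_a,r)$, and summing via Theorem~\ref{thm:main} gives \eqref{eq:genericdegenerate}.

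The only step needing care is the genericity argument, where one must check that each minor polynomial is nonzero. Surjectivity of $U\mapsto E_a^TU$ settles this immediately, so no step is a real obstacle.
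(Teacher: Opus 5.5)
Your proposal is correct and follows the same route as the paper, which gives no separate proof and treats the corollary as immediate from Theorem~\ref{thm:main} and Lemma~\ref{lem:spectralC}: reduce to the per-sector ranks $\rank(P_aU)$, bound each by $\dim\ran P_a=g_a$ and by $\rank U=d\le r$, and note that $P_aU$ has full rank $\min(g_a,r)$ for generic $U$. Your nonvanishing-maximal-minor argument through the surjection $U\mapsto E_a^TU$ is a clean and complete way to make the paper's unproved genericity claim precise.
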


\begin{remark}[Depth of spectral mixing, not bare rank]
A low-rank quench has a low-rank response only when its coupling directions remain confined to a small invariant subspace of $\Omega$. If $U$ lies in one eigenspace, then no rank amplification occurs. If the coupling has support across many distinct spectral sectors, the response rank grows to the dimension of the resulting block-Krylov space.
\end{remark}

\subsection{A standalone matrix-square-root theorem}

We also obtain an independent matrix-analysis finding: the exact support of a sign-definite square-root update.

\begin{corollary}[Exact support of a sign-definite square-root update]
\label{cor:sqrt}
Let $A=A^T>0$, let $U\in\R^{N\times r}$, and let $\eps\in\R\setminus\{0\}$ satisfy $A+\eps UU^T>0$. Define
\begin{equation}
\calD(A,U):=\operatorname{span}\{\ran U,A\ran U,A^2\ran U,\ldots\}.
\label{eq:Ddef}
\end{equation}
Then
\begin{align}
\sgn(\eps)\bigl((A+\eps UU^T)^{1/2}-A^{1/2}\bigr)&\succeq0,\label{eq:sqrtsign}\\
\ker\bigl((A+\eps UU^T)^{1/2}-A^{1/2}\bigr)&=\calD(A,U)^\perp,\label{eq:sqrtkernel}\\
\ran\bigl((A+\eps UU^T)^{1/2}-A^{1/2}\bigr)&=\calD(A,U).\label{eq:sqrtrange}
\end{align}
Consequently,
\begin{equation}
\boxed{
\rank\left((A+\eps UU^T)^{1/2}-A^{1/2}\right)
=
\dim\calD(A,U).
}
\label{eq:sqrtrank}
\end{equation}
\end{corollary}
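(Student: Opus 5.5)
The plan is to deduce Corollary~\ref{cor:sqrt} directly from Theorem~\ref{thm:main} by the substitution $\Omega:=A^{1/2}$. Since $A=A^T>0$, its principal square root $\Omega=A^{1/2}$ is symmetric positive definite, and $\Omega^2=A$. The hypothesis $A+\eps UU^T>0$ becomes $\Omega^2+\eps UU^T>0$, which is exactly the stability hypothesis of Theorem~\ref{thm:main}. With $W_\eps=(A+\eps UU^T)^{1/2}$ we have $W_\eps-\Omega=(A+\eps UU^T)^{1/2}-A^{1/2}$. Therefore \eqref{eq:signX}, \eqref{eq:kernelX} and \eqref{eq:rangeX} become \eqref{eq:sqrtsign}, \eqref{eq:sqrtkernel} and \eqref{eq:sqrtrange}, except that the subspace appearing there is $\calC(A^{1/2},U)$ rather than $\calD(A,U)$.

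The one real step is to identify these two subspaces: we need $\calC(A^{1/2},U)=\calD(A,U)$. I would do this with Lemma~\ref{lem:spectralC}.

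Write $A=\sum_a\lambda_aP_a$ with distinct eigenvalues $\lambda_a>0$. Then $A^{1/2}=\sum_a\sqrt{\lambda_a}P_a$. The values $\sqrt{\lambda_a}$ are still distinct because the square root is injective on $(0,\infty)$, so $A$ and $A^{1/2}$ have exactly the same spectral projectors $P_a$.

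Lemma~\ref{lem:spectralC} uses only the symmetry of the matrix and the distinctness of its eigenvalues, not positivity. Applying it to $A^{1/2}$ gives $\calC(A^{1/2},U)=\bigoplus_a\ran(P_aU)$. Applying the same Lagrange-interpolation argument to $A$ gives $\calD(A,U)=\calC(A,U)=\bigoplus_a\ran(P_aU)$. Hence the two subspaces coincide.

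As a cross-check, the same identity follows without the lemma. The matrix $A^{1/2}$ is a polynomial in $A$, obtained by interpolating $\sqrt{\cdot}$ on the spectrum, and $A=(A^{1/2})^2$. So each Krylov span is invariant under the other generator and contains $\ran U$. Each is also the smallest such invariant subspace for its own generator, so they are equal.

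The rank formula \eqref{eq:sqrtrank} then follows by taking dimensions in \eqref{eq:sqrtrange}. I do not expect a genuine obstacle. The only point needing care is noting that passing from $A$ to $A^{1/2}$ preserves the spectral projectors, which is exactly what makes the Krylov spaces of $A$ and $A^{1/2}$ agree.
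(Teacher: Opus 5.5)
Your proposal is correct and matches the paper's proof. The paper also substitutes $\Omega=A^{1/2}$ into Theorem~\ref{thm:main} and identifies $\calC(A^{1/2},U)$ with $\calD(A,U)$ because $A$ and $A^{1/2}$ share the same spectral projectors; your two arguments for that identification (Lemma~\ref{lem:spectralC} using only distinct eigenvalues, and the polynomial-invariance check) just spell out the step the paper states in one line.
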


\begin{proof}
Apply Theorem~\ref{thm:main} with $\Omega=A^{1/2}$. The matrices $A$ and $A^{1/2}$ have the same spectral projectors, so their cyclic subspaces generated by $U$ coincide with $\calD(A,U)$.
\end{proof}

This identity sharpens the general observation that a low-rank input update can generate a full-rank matrix-function correction. It also explains why full algebraic rank and accurate low-rank approximation can coexist: the theorem determines which singular values are nonzero, not how large they are.

\section{Localized defects and an exact actuator-position rule}
\label{sec:cavity}

Applying the theorem to a localized field defect, we find an exact actuator-position law. Consider a scalar field after a finite spectral truncation,
\begin{equation}
\phi(x)=\sum_{n=1}^{N}q_n\varphi_n(x),
\qquad
H_0=\frac12\sum_{n=1}^{N}\left(p_n^2+\omega_n^2q_n^2\right).
\label{eq:fieldexp}
\end{equation}
A localized quadratic defect at $x_0$ gives
\begin{equation}
\frac{\eps}{2}\phi(x_0)^2
=
\frac{\eps}{2}q^Tuu^Tq,
\qquad
u_n=\varphi_n(x_0).
\label{eq:pointdefect}
\end{equation}
The bare stiffness update is rank one. If the cavity spectrum is nondegenerate, Corollary~\ref{cor:rankone} gives
\begin{equation}
\rank\beta_\eps
=
\#\{n\le N:\varphi_n(x_0)\neq0\}.
\label{eq:fieldcount}
\end{equation}
This is the localized-element consequence of the theorem: one point defect activates exactly the retained frequency sectors whose mode functions are nonzero at the defect position.

\begin{proposition}[One-dimensional Dirichlet cavity]
\label{prop:dirichlet}
For a massless scalar field on $[0,L]$ with Dirichlet modes
\begin{equation}
\varphi_n(x)=\sqrt{\frac{2}{L}}\sin\left(\frac{n\pi x}{L}\right),
\qquad
\omega_n=\frac{n\pi c}{L},
\end{equation}
consider the first $N$ modes and a stable point-defect quench at $x_0\in(0,L)$. Then:
\begin{enumerate}[label=(\roman*)]
\item If $x_0/L$ is irrational, $\rank\beta_\eps=N$.
\item If $x_0/L=p/q$ in lowest terms, then
\begin{equation}
\boxed{
\rank\beta_\eps=N-\left\lfloor\frac{N}{q}\right\rfloor.
}
\label{eq:rationalplacement}
\end{equation}
\end{enumerate}
\end{proposition}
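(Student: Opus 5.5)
The plan is to reduce everything to Corollary~\ref{cor:rankone}, in its nondegenerate form \eqref{eq:fieldcount}. The Dirichlet frequencies $\omega_n=n\pi c/L$ are distinct and positive, so $\Omega=\diag(\omega_1,\ldots,\omega_N)$ has simple spectrum. Each spectral projector $P_n$ is the coordinate projector onto mode $n$. The point-defect vector is $u_n=\varphi_n(x_0)$, so $P_nu\neq0$ exactly when $\sin(n\pi x_0/L)\neq0$. Stability $\Omega^2+\eps uu^T>0$ is assumed in the statement. Hence Theorem~\ref{thm:main} applies at the given coupling, and
\begin{equation*}
\rank\beta_\eps=\#\{1\le n\le N:\ \sin(n\pi x_0/L)\neq0\}.
\end{equation*}
It then remains to count the zeros, which is elementary number theory.

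Now $\sin(n\pi\theta)=0$ with $\theta=x_0/L\in(0,1)$ if and only if $n\theta\in\mathbb{Z}$.

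For (i), if $\theta$ is irrational then $n\theta\notin\mathbb{Z}$ for every $n\ge1$. So no mode vanishes at $x_0$, and the rank is $N$. This is also a direct instance of Corollary~\ref{cor:generic}.

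For (ii), write $\theta=p/q$ with $\gcd(p,q)=1$. Then $np/q\in\mathbb{Z}$ iff $q\mid np$, and since $\gcd(p,q)=1$ this holds iff $q\mid n$ (Euclid's lemma). The vanishing indices in $\{1,\ldots,N\}$ are therefore the multiples of $q$, of which there are exactly $\lfloor N/q\rfloor$. This gives $\rank\beta_\eps=N-\lfloor N/q\rfloor$, which is \eqref{eq:rationalplacement}.

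There is no substantive obstacle, since all the analytic work lives in Theorem~\ref{thm:main}. The only step needing care is the coprimality argument, where the lowest-terms hypothesis is what makes the count exact. It is also worth noting that $x_0\in(0,L)$ forces $q\ge2$, so the rank is never zero.
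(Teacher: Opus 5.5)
Your proposal is correct and follows the paper's proof essentially verbatim. In both, the simple Dirichlet spectrum and Corollary~\ref{cor:rankone} reduce the rank to counting indices with $\sin(n\pi x_0/L)\neq0$, and the lowest-terms hypothesis makes the vanishing indices exactly the multiples of $q$; your Euclid's-lemma step and the remark that $q\ge2$ simply spell out what the paper states briefly.
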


\begin{proof}
The spectrum is simple. For irrational $x_0/L$, $n x_0/L$ is never an integer, so no mode amplitude vanishes. If $x_0/L=p/q$ in lowest terms, then $\sin(n\pi p/q)=0$ exactly when $q$ divides $n$. There are $\lfloor N/q\rfloor$ such modes among the first $N$, and Corollary~\ref{cor:rankone} gives the result.
\end{proof}

For example, a defect at the midpoint $x_0=L/2$ misses every even mode and activates $\lceil N/2\rceil$ channels. More generally, for any fixed cutoff $N$, all but finitely many positions in $(0,L)$ make one point defect full rank and therefore produce all $N$ nonzero canonical squeezing channels; the only exceptions are the finitely many nodes of the first $N$ mode functions. In higher-dimensional cavities, symmetry-induced degeneracies introduce the additional bottleneck in Corollary~\ref{cor:degenerate}.

A literal point interaction can require ultraviolet regularization in the continuum. A smeared field operator $\phi(f)=\sum_n\inner{f}{\varphi_n}q_n$ gives the same finite-mode theorem with $u_n=\inner{f}{\varphi_n}$. The exact continuum limit requires the self-adjointness, stability, and implementability conditions studied for bosonic quadratic Hamiltonians and rank-one field perturbations \citep{Nam2016,Derezinski2017,Gamet2026,Shale1962}.

\section{Design implications, conditioning, and numerical checks}
\label{sec:applications}

The main physical finding is a rank-amplification law: the hardware rank $d=\rank U$ is not, by itself, the number of squeezing channels. The exact response rank is
\begin{equation}
\boxed{
r_{\mathrm{squeeze}}=\sum_a\rank(P_aU).
}
\label{eq:designrule}
\end{equation}
This gives a direct hardware-screening rule before the full Gaussian transformation is computed.

\paragraph{One physical element can activate every canonical channel.}
Under the simple-spectrum overlap condition, we find that a localized defect, tunable boundary element, or collective quadratic coordinate can be rank one in the Hamiltonian while producing all $N$ nonzero canonical squeezing channels. Thus broad simultaneous support need not require an independent element on every channel, a potentially useful reduction in cavity and circuit hardware \citep{Johansson2010}.

\paragraph{Geometry and spectral overlap become design variables.}
The projections $P_aU$ identify exactly which sectors are dark. In a simple-spectrum cavity, avoiding the finitely many nodal positions of the retained modes is enough to obtain full algebraic support. With degeneracies, an actuator of physical rank $d$ can resolve at most $d$ directions per degenerate sector, so lifting degeneracies, changing actuator geometry, or adding independent coupling profiles can enlarge the support.

\paragraph{Support and strength must be optimized separately.}
At weak coupling, the signed square-root derivative is governed by the controllability Gramian
\begin{equation}
G_0=\int_0^\infty e^{-s\Omega}UU^Te^{-s\Omega}\,ds,
\label{eq:G0}
\end{equation}
and the first-order anomalous block is $\frac12\Omega^{-1/2}G_0\Omega^{-1/2}$ up to the coupling scale. In a simple-spectrum rank-one eigenbasis,
\begin{equation}
\left.\frac{d\beta_\eps}{d\eps}\right|_{\eps=0,ij}
=
\frac{u_iu_j}{2\sqrt{\omega_i\omega_j}(\omega_i+\omega_j)}.
\label{eq:weakbeta}
\end{equation}
The Cauchy kernel $1/(\omega_i+\omega_j)$ is nonsingular for distinct positive frequencies, explaining how a rank-one numerator spreads across every touched frequency; Theorem~\ref{thm:main} shows that this support persists at every stable nonzero coupling. A useful design should therefore first maximize the exact support dimension and then optimize a conditioning metric, such as the smallest eigenvalue of $G_0$ on $\calC$ or the smallest targeted singular value of $\beta_\eps$. Multimode squeezing experiments already use spectral and spatial shaping to tailor mode strengths \citep{Arzani2018,Fabre2020}; the theorem adds an exact support criterion.

\subsection{Numerical consistency and effective rank}
\label{sec:numerics}

We evaluated the exact matrices by symmetric eigendecomposition and compared thresholded numerical ranks with the spectral prediction in \eqref{eq:mainrank}. These calculations check implementation and conditioning; they are not substitutes for the proof. For the observed column of Table~\ref{tab:numerics}, we use the strict relative threshold $\tau_{\mathrm{num}}=10^{-12}$ defined below; all matrices and coupling constants are specified in Appendix~\ref{app:numerics}.

\begin{table}[htbp]
\centering
\caption{Finite-dimensional examples. The final column is the ratio of the smallest predicted-active singular value to the largest singular value of $\beta_\eps$. The observed rank uses the strict relative threshold $\tau_{\mathrm{num}}=10^{-12}$.}
\label{tab:numerics}
\begin{tabular}{@{}lccccc@{}}
\toprule
case & modes $N$ & bare rank & predicted & observed & $\sigma_{\min}/\sigma_{\max}$\\
\midrule
rank-one, simple spectrum & 8 & 1 & 8 & 8 & $1.72\times10^{-9}$\\
rank-three, mixed profiles & 10 & 3 & 10 & 10 & $3.97\times10^{-5}$\\
degenerate, partial support & 9 & 2 & 5 & 5 & $4.08\times10^{-4}$\\
\bottomrule
\end{tabular}
\end{table}

Across the three named examples, all factorization, Lyapunov, and 160-node Gauss--Legendre path-integration residuals were below $2\times10^{-14}$. A fixed-seed suite of 250 additional positive and negative stable quenches produced no rank mismatches at threshold $10^{-12}$, and its monitored factorization and Lyapunov residuals were also below $2\times10^{-14}$. Exact floating-point values and execution-environment metadata are stored in the public result record because the final digits can vary slightly across BLAS/LAPACK implementations.

\begin{figure}[htbp]
\centering
\includegraphics[width=0.78\linewidth]{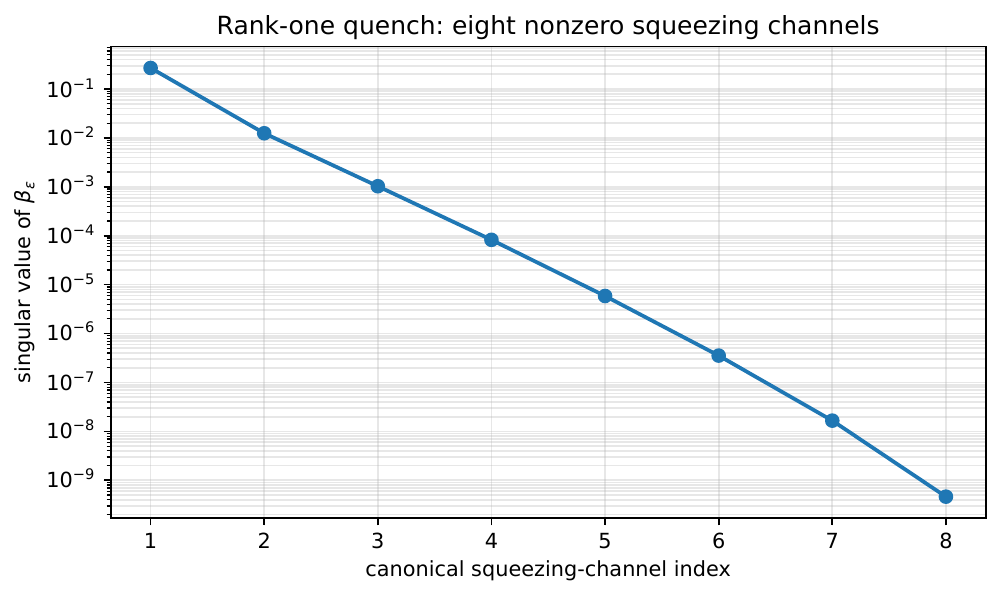}
\caption{Singular values of $\beta_\eps$ for the eight-mode rank-one example in Appendix~\ref{app:numerics}. Theorem~\ref{thm:main} certifies eight nonzero squeezing channels, but their strengths span nearly nine orders of magnitude.}
\label{fig:spectrum}
\end{figure}

Figure~\ref{fig:spectrum} illustrates the distinction between exact and practical rank. Algebraic rank answers whether a channel is exactly absent; it does not determine whether that channel is experimentally resolvable, robust to loss, or useful on a finite time scale. If $\beta_\eps\neq0$, define, for $0\le\tau<1$,
\begin{equation}
r_{\mathrm{eff}}(\tau)
:=
\#\{j:\sigma_j(\beta_\eps)>\tau\sigma_1(\beta_\eps)\}.
\label{eq:effectiverank}
\end{equation}
If $\beta_\eps=0$, set $r_{\mathrm{eff}}(\tau)=0$. The strict inequality gives
\begin{equation}
r_{\mathrm{eff}}(0)=\rank\beta_\eps,
\end{equation}
which Theorem~\ref{thm:main} determines exactly. Estimating $r_{\mathrm{eff}}(\tau)$ for $\tau>0$ is a separate conditioning problem, consistent with the rapid spectral decay studied for low-rank square-root corrections \citep{Shmueli2024}.

The finding concerns simultaneous support, not arbitrary high-dimensional control. A single quench parameter traces only a one-parameter family of Gaussian transformations, with channel strengths and mode shapes fixed by $\Omega$, $U$, and $\eps$. Arbitrary synthesis generally requires additional controls, time sequencing, and passive mixing. The theorem is therefore best used as a first-stage hardware screen: determine whether one or a few elements already reach the desired channel set, then optimize strength and controllability within that support.

\section{Prior work, scope, and limitations}
\label{sec:scope}

The broad physical fact that one localized or collective element can couple many bosonic modes is not claimed as new. General Bogoliubov diagonalization and the operator-theoretic treatment of quadratic bosonic Hamiltonians are established in finite and infinite dimensions \citep{Nam2016,Derezinski2017}. Gamet \citep{Gamet2026} studies a free bosonic Hamiltonian plus the square of a smeared field operator, with emphasis on diagonalization, self-adjointness, ultraviolet behavior, and renormalization. In numerical linear algebra, low-rank matrix-function updates, square-root derivatives, and Krylov approximations are also established \citep{DelMoral2018,Fasi2023,Beckermann2018,Shmueli2024,Higham2008,Simoncini2016}.

Against this background, the finding claimed here is narrower and exact: the kernel, range, and rank of the stable sign-definite square-root update, and therefore of the anomalous Bogoliubov response, are determined solely by spectral reachability at every finite stable coupling. To the best of our knowledge, the identity \eqref{eq:mainrank}, its kernel-and-range strengthening \eqref{eq:betasupport}, the one-element/full-rank corollary, and the actuator-position count \eqref{eq:rationalplacement} have not previously been stated in this form. The novelty claim is limited to these exact support statements and their Gaussian-bosonic interpretation; the underlying Hamiltonian, Bogoliubov, matrix-square-root, Gramian, Cauchy, and approximation ingredients are not claimed as new.

The assumptions are essential. The quench is stable, sign definite, and instantaneous. Indefinite updates can lose the common semidefinite sign that prevents cancellation, and general time-dependent pulses require a propagator-level analysis; a slow adiabatic ramp need not create particles. The theorem certifies algebraic support rather than particle number, entanglement, metrological gain, loss tolerance, or signal-to-noise ratio. It is also finite dimensional and Gaussian: an infinite-dimensional field theory additionally requires well-defined operator or quadratic-form dynamics, stability, and the appropriate Hilbert--Schmidt implementability condition \citep{Shale1962,Nam2016,Derezinski2017,Gamet2026}, while nonquadratic interactions can invalidate the Bogoliubov description. Extending the support law to indefinite, time-dependent, dissipative, and infinite-dimensional settings remains open.

\section{Conclusion}

We find that one localized rank-one quadratic element can activate all $N$ canonical squeezing channels in a finite $N$-mode system, and we prove that this holds at every nonzero stable coupling under the simple-spectrum and all-mode-overlap conditions. The microscopic update has rank one; the exact anomalous Bogoliubov response has rank $N$. More generally,
\[
\rank\beta_\eps=\sum_a\rank(P_aU),
\]
so the active-channel space is precisely the spectral-reachable subspace generated by $(\Omega,U)$. In a one-dimensional Dirichlet cavity, this gives the exact placement law $N-\lfloor N/q\rfloor$ for a defect at $x_0/L=p/q$ in lowest terms. These exact finite-coupling support laws are the central findings.

The practical implication is a two-stage design rule: first use the unperturbed spectrum and actuator profile to identify dark sectors, resolve degeneracy bottlenecks, and determine whether one or a few elements already provide the desired support; then optimize singular values, pulse sequences, and independent control. The theorem does not imply that every active channel is strong, loss tolerant, or separately tunable. It establishes the potentially hardware-saving result that physical perturbation rank can be far smaller than simultaneous multimode response rank.

\section*{Data and code availability}
No external datasets are used. The complete manuscript source, verification code, figure-generation script, and machine-readable numerical results are publicly available at \url{https://github.com/kansari123/Exact-Spectral-Reachability-Theorem-for-Gaussian-Bosonic-Systems}. The repository reproduces Table~\ref{tab:numerics}, Figure~\ref{fig:spectrum}, and the checks in Section~\ref{sec:numerics}.

\appendix

\section{Numerical methods and reproducibility}
\label{app:numerics}

The supplied script computes principal powers of a real symmetric positive-definite matrix by eigendecomposition. Given $\Omega,U,\eps$, it forms
\begin{equation}
K_\eps=\Omega^2+\eps UU^T,
\qquad
W_\eps=K_\eps^{1/2},
\end{equation}
and evaluates \eqref{eq:beta} and \eqref{eq:betafactor} independently. The predicted rank is computed from spectral projectors of $\Omega$, not from an ill-conditioned raw Krylov matrix. Within a degenerate block, the contribution is the thresholded numerical rank of $P_aU$, using the strict criterion $\sigma_j>10^{-12}\sigma_1$.

The three examples in Table~\ref{tab:numerics} are fully deterministic:
\begin{align*}
\text{(i)}\quad &N=8,\quad \Omega=\diag(10^{(j-1)/7})_{j=1}^{8},\quad U=(1,\ldots,1)^T,\quad \eps=1;\\
\text{(ii)}\quad &N=10,\quad \Omega=\diag(10^{(j-1)/9})_{j=1}^{10},\quad \eps=1,\\[-0.2em]
&U_{jk}=\sin(jk)+0.3\cos((j+1)(k+2)),\quad 1\le j\le10,\ 1\le k\le3;\\
\text{(iii)}\quad &N=9,\quad \Omega=\diag(1,1,1,2,2,3,3,3,3),\quad \eps=\tfrac12,\\[-0.2em]
&U^T=\begin{pmatrix}
1&0&1&1&2&1&0&1&2\\
0&1&1&2&4&0&1&-1&1
\end{pmatrix}.
\end{align*}
The projected ranks in case (iii) are $(2,1,2)$, giving exact rank five.

For each case, the script integrates \eqref{eq:gramianpath} over $t\in[0,1]$ with 160-node Gauss--Legendre quadrature. It also runs 250 fixed-seed trials (seed $20260812$), alternating positive and negative stable quenches, and checks the spectral-sector rank at relative threshold $10^{-12}$, the Lyapunov residual \eqref{eq:lyap}, and the factorization \eqref{eq:betafactor}. Residuals are Frobenius norms divided by the larger norm of the compared sides. Exact arrays, singular values, tolerances, residuals, and the suite summary are stored in \texttt{results/verification\_results.json}. Severe Cauchy conditioning can push the weakest nonzero singular values below numerical precision; this affects thresholded rank detection, not the exact theorem.


{\footnotesize
\begin{thebibliography}{99}
\setlength{\itemsep}{0.05em}

\bibitem{Braunstein2005}
S.~L. Braunstein and P. van Loock,
``Quantum information with continuous variables,''
\emph{Reviews of Modern Physics} \textbf{77}, 513--577 (2005).
\href{https://doi.org/10.1103/RevModPhys.77.513}{doi:10.1103/RevModPhys.77.513}.

\bibitem{Weedbrook2012}
C. Weedbrook, S. Pirandola, R. Garc\'{i}a-Patr\'{o}n, N.~J. Cerf,
T.~C. Ralph, J.~H. Shapiro, and S. Lloyd,
``Gaussian quantum information,''
\emph{Reviews of Modern Physics} \textbf{84}, 621--669 (2012).
\href{https://doi.org/10.1103/RevModPhys.84.621}{doi:10.1103/RevModPhys.84.621}.

\bibitem{Fabre2020}
C. Fabre and N. Treps,
``Modes and states in quantum optics,''
\emph{Reviews of Modern Physics} \textbf{92}, 035005 (2020).
\href{https://doi.org/10.1103/RevModPhys.92.035005}{doi:10.1103/RevModPhys.92.035005}.

\bibitem{Cariolaro2016}
G. Cariolaro and G. Pierobon,
``Bloch--Messiah reduction of Gaussian unitaries by Takagi factorization,''
\emph{Physical Review A} \textbf{94}, 062109 (2016).
\href{https://doi.org/10.1103/PhysRevA.94.062109}{doi:10.1103/PhysRevA.94.062109}.

\bibitem{Nam2016}
P.~T. Nam, M. Napi\'{o}rkowski, and J.~P. Solovej,
``Diagonalization of bosonic quadratic Hamiltonians by Bogoliubov transformations,''
\emph{Journal of Functional Analysis} \textbf{270}, 4340--4368 (2016).
\href{https://doi.org/10.1016/j.jfa.2015.12.007}{doi:10.1016/j.jfa.2015.12.007}.

\bibitem{Derezinski2017}
J. Derezi\'{n}ski,
``Bosonic quadratic Hamiltonians,''
\emph{Journal of Mathematical Physics} \textbf{58}, 121101 (2017).
\href{https://doi.org/10.1063/1.5017931}{doi:10.1063/1.5017931}.

\bibitem{Gamet2026}
T. Gamet,
``Renormalization of Bosonic Quadratic Hamiltonians Involving Rank One Perturbations,''
\emph{Annales Henri Poincar\'e} (2026), published online 2 February 2026.
\href{https://doi.org/10.1007/s00023-026-01659-2}{doi:10.1007/s00023-026-01659-2}.

\bibitem{DelMoral2018}
P. Del Moral and A. Niclas,
``A Taylor expansion of the square root matrix function,''
\emph{Journal of Mathematical Analysis and Applications} \textbf{465}, 259--266 (2018).
\href{https://doi.org/10.1016/j.jmaa.2018.05.005}{doi:10.1016/j.jmaa.2018.05.005}.


\bibitem{Fasi2023}
M. Fasi, N.~J. Higham, and X. Liu,
``Computing the square root of a low-rank perturbation of the scaled identity matrix,''
\emph{SIAM Journal on Matrix Analysis and Applications} \textbf{44}, 156--174 (2023).
\href{https://doi.org/10.1137/22M1471559}{doi:10.1137/22M1471559}.

\bibitem{Beckermann2018}
B. Beckermann, D. Kressner, and M. Schweitzer,
``Low-rank updates of matrix functions,''
\emph{SIAM Journal on Matrix Analysis and Applications} \textbf{39}, 539--565 (2018).
\href{https://doi.org/10.1137/17M1140108}{doi:10.1137/17M1140108}.

\bibitem{Shmueli2024}
S. Shmueli, P. Drineas, and H. Avron,
``Low-rank updates of matrix square roots,''
\emph{Numerical Linear Algebra with Applications} \textbf{31}, e2528 (2024).
\href{https://doi.org/10.1002/nla.2528}{doi:10.1002/nla.2528}.

\bibitem{Higham2008}
N.~J. Higham,
\emph{Functions of Matrices: Theory and Computation}
(SIAM, Philadelphia, 2008).
\href{https://doi.org/10.1137/1.9780898717778}{doi:10.1137/1.9780898717778}.

\bibitem{Simoncini2016}
V. Simoncini,
``Computational methods for linear matrix equations,''
\emph{SIAM Review} \textbf{58}, 377--441 (2016).
\href{https://doi.org/10.1137/130912839}{doi:10.1137/130912839}.

\bibitem{Shale1962}
D. Shale,
``Linear symmetries of free boson fields,''
\emph{Transactions of the American Mathematical Society} \textbf{103}, 149--167 (1962).
\href{https://doi.org/10.2307/1993745}{doi:10.2307/1993745}.

\bibitem{Johansson2010}
J.~R. Johansson, G. Johansson, C.~M. Wilson, and F. Nori,
``The dynamical Casimir effect in superconducting microwave circuits,''
\emph{Physical Review A} \textbf{82}, 052509 (2010).
\href{https://doi.org/10.1103/PhysRevA.82.052509}{doi:10.1103/PhysRevA.82.052509}.

\bibitem{Arzani2018}
F. Arzani, C. Fabre, and N. Treps,
``Versatile engineering of multimode squeezed states by optimizing the pump spectral profile in spontaneous parametric down-conversion,''
\emph{Physical Review A} \textbf{97}, 033808 (2018).
\href{https://doi.org/10.1103/PhysRevA.97.033808}{doi:10.1103/PhysRevA.97.033808}.

\end{thebibliography}
}
\end{document}